\theoremstyle{plain}
\newtheorem{thm}{Theorem}
\newtheorem{lem}{Lemma}
\theoremstyle{definition}
\newtheorem{rem}{Remark}
\newenvironment{manualtheorem}[1]{%
\manualtheoreminner
}{\endmanualtheoreminner}
\newcommand{\mc}{\mathcal}
\newcommand{\B}{{\mathcal B}}
\newcommand{\D}{\tilde{D}}
\newcommand{\tr}{{\rm Tr}}
\newcommand{\id}{{\mathbb I}}
\newcommand{\ket}[1]{| #1 \rangle}
\newcommand{\bra}[1]{\langle #1 |}
\newcommand{\braket}[2]{\left\langle #1\! \mid\! #2 \right\rangle}
\newcommand{\R}{{\mathcal R}}
\renewcommand{\H}{\mathcal{H}}
\renewcommand{\l}{\Lambda}
\renewcommand{\P}{\mathcal{P}}
\newcommand{\K}{\mathcal{K}}
\newcommand{\dd}{\mathrm{d}}
\newcommand{\norm}[1]{\left\lVert#1\right\rVert}
\newcommand*\circled[1]{\tikz[baseline=(char.base)]{
            \node[shape=circle,draw,inner sep=1.1pt] (char) {#1};}}
\title{Revisiting the equality conditions of the data processing inequality for the sandwiched R\'enyi divergence}
\author{Jinzhao Wang\footnote{jinzwang@phys.ethz.ch}, Henrik Wilming\footnote{henrikw@phys.ethz.ch}}
\date{\small Institute  for  Theoretical  Physics,  ETH  Zurich,  8093  Zurich,  Switzerland\\ }
\begin{document}
\maketitle
\abstract{We provide a transparent, simple and unified treatment of recent results on the equality conditions for the data processing inequality (DPI) of the sandwiched quantum R\'enyi divergence, including the statement that equality in the data processing implies recoverability via the Petz recovery map for the full range of $\alpha$ recently proven by Jen\v cov\'a. We also obtain a new set of equality conditions, generalizing a previous result by Leditzky et al.}

\section{Introduction}
One of the most fundamental relations in quantum information theory is the \emph{data-processing inequality} (DPI)
\begin{align}\label{eq:dpi1}
D(\l(\rho) || \l(\sigma)) \leq D(\rho||\sigma)
\end{align}
where $D(\cdot \| \cdot)$ is the quantum relative entropy and $\l$ is a completely positive trace-preserving (CPTP) map. Physically, it means that it's harder to distinguish two quantum states after a quantum channel acted on them. 
The DPI holds more generally in the context of von Neumann Algebras, but to avoid technicalities, we focus on finite dimensional quantum systems in this paper.
We also assume for simplicity that all density matrices are strictly positive.

There are various generalizations of the relative entropy to a family of quantum R\'enyi divergences, largely motivated by the corresponding classical notions. Some common ones studied in the literature are the Petz quantum R\'enyi divergence $\overline{D}_\alpha$\cite{petz1986quasi,hiai2011quantum} and the sandwiched quantum R\'enyi divergence $\D_\alpha$~\cite{muller2013quantum,wilde2014strong}. There are also other more general $f$-divergences and maximal $f$-divergences studied in the literature \cite{hiai2011quantum,hiai2017different}. The proofs of their DPI's and can all be unified with a single approach using the relative modular operator~\cite{wilde2018optimized}. This method is largely due to Araki~\cite{araki1976relative}, and later elaborated upon by Petz, Nielson, Wilde and many others~\cite{petz1986quasi,petz2003monotonicity,nielsen2005simple,hiai2011quantum,wilde2018optimized}. A particular advantage of this approach is that it more naturally generalizes to the setting of infinite-dimensional von~Neumann algebras required in quantum field theory \cite{witten2018aps}. The sandwiched R\'enyi divergence has also recently acquired some attention in quantum field theory \cite{lashkari2019constraining,faulkner2020approximate,moosa2020renyi,hollands2020variational}. 

Our analysis in this work is also based on the relative modular operator machinery, and we shall study the equality conditions for saturating the DPI for the sandwiched R\'enyi divergences~\cite{leditzky2017data,jenvcova2017preservation,jenvcova2018renyi,jenvcova2017r}. Our main aim in this work is to give a simple, transparent and unified treatment of the equality conditions known in the literature, which we discuss below. But we also find new equality conditions which generalize the known conditions. 

The sandwiched R\'enyi divergences were first introduced in~\cite{muller2013quantum,wilde2014strong} and can be defined as (for $\alpha\in[-1,0)\cup(0,1)$)
\begin{equation}
\D_\alpha(\rho||\sigma):= \frac{1-\alpha}{\alpha}\log\tr(\rho^\frac12\sigma^{-\alpha}\rho^\frac12)^\frac{1}{1-\alpha}.
\end{equation} 
In the literature it is more customary to use a different parametrization of the R\'enyi divergences, which in our convention amounts to replacing $\alpha$ with $n$ fulfilling $\alpha=\frac{n-1}{n}$, so that $n\in[1/2,1)\cup(1,\infty)$. In our convention, the relative entropy can be obtained as the $\alpha\rightarrow 0$ limit of $\D_\alpha$, and the DPI holds in the range $\alpha\in [-1,0)\cup(0,1)$. We restrict to this range and, since we consider strictly positive $\rho,\sigma$, the support of $\rho$ always lies in the support of $\sigma.$

Due to the fundamental importance of the data-processing inequality, significant effort has been devoted to understanding when it is saturated, both for the quantum relative entropy and other divergences. 
For example, it is proved in~\cite{leditzky2017data} that a necessary and sufficient equality condition for the sandwiched R\'enyi divergence is as follows.
\begin{thm}[Leditzky, Rouz\'e and Datta]\label{theorem:equality}
For $\alpha\in[-1,0)\cup(0,1)$, density matrices $\sigma,\rho$ and a CPTP map $\l$, we have $\D_\alpha(\sigma||\rho) = \D_\alpha(\Lambda(\sigma)||\Lambda(\rho))$ if and only if
\begin{equation}\label{eq:equalitycondition1}
\sigma^{-\frac{\alpha}{2}}(\sigma^{-\frac{\alpha}{2}}\rho\sigma^{-\frac{\alpha}{2}})^\frac{\alpha}{1-\alpha}\sigma^{-\frac{\alpha}{2}}= \l^*\left(\l(\sigma)^{-\frac{\alpha}{2}}[\l(\sigma)^{-\frac{\alpha}{2}}\l(\rho)\l(\sigma)^{-\frac{\alpha}{2}}]^\frac{\alpha}{1-\alpha}\l(\sigma)^{-\frac{\alpha}{2}}\right)\,.
\end{equation}
\end{thm}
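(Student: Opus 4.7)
The plan is to bypass any heavy operator machinery and reduce the equivalence to a short calculus argument, treating the DPI itself as a black box. Setting $p := 1/(1-\alpha)$ and using cyclicity of the trace, I would first rewrite the underlying sandwiched R\'enyi quantity as
\[
\tilde Q_\alpha(\rho\|\sigma) := \tr\!\bigl(\rho^{1/2}\sigma^{-\alpha}\rho^{1/2}\bigr)^{p} = \tr\!\bigl(\sigma^{-\alpha/2}\rho\,\sigma^{-\alpha/2}\bigr)^{p},
\]
so that $\D_\alpha = \tfrac{1-\alpha}{\alpha}\log\tilde Q_\alpha$. Denoting the operator on the left-hand side of \eqref{eq:equalitycondition1} by
\[
f_\alpha(\rho,\sigma) := \sigma^{-\alpha/2}\bigl(\sigma^{-\alpha/2}\rho\,\sigma^{-\alpha/2}\bigr)^{p-1}\sigma^{-\alpha/2},
\]
a direct cyclicity computation yields the key identity $\tr\!\bigl(f_\alpha(\rho,\sigma)\rho\bigr)=\tilde Q_\alpha(\rho\|\sigma)$, and analogously for the pushforwards under $\l$.

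The ``if'' direction is then essentially free. Assuming \eqref{eq:equalitycondition1}, I would pair both sides with $\rho$ in the trace, use $\tr(\l^*(Y)\rho)=\tr(Y\l(\rho))$, and apply the above identity to both $(\rho,\sigma)$ and $(\l(\rho),\l(\sigma))$, obtaining $\tilde Q_\alpha(\rho\|\sigma)=\tilde Q_\alpha(\l(\rho)\|\l(\sigma))$ and hence equality of $\D_\alpha$.

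For the ``only if'' direction I would combine DPI with a first-order optimality argument on an open domain. Because $\tilde Q_\alpha(c\tau\|\sigma)=c^{p}\tilde Q_\alpha(\tau\|\sigma)$, DPI extends by homogeneity from density matrices to the full cone of strictly positive operators, and so the map
\[
g(\tau) := \tilde Q_\alpha(\tau\|\sigma) - \tilde Q_\alpha(\l(\tau)\|\l(\sigma))
\]
(or its negative, depending on the sign of $\alpha$) is non-negative on that open cone and vanishes at $\tau=\rho$. Hence $\rho$ is a stationary point, and a standard matrix-derivative calculation based on $\tfrac{d}{d\epsilon}\tr A(\epsilon)^p=p\tr(A^{p-1}A')$ gives the gradients
\[
\nabla_\tau\tilde Q_\alpha(\tau\|\sigma)\big|_{\tau=\rho}=p\,f_\alpha(\rho,\sigma),\qquad \nabla_\tau\tilde Q_\alpha(\l(\tau)\|\l(\sigma))\big|_{\tau=\rho}=p\,\l^*\!\bigl(f_\alpha(\l(\rho),\l(\sigma))\bigr).
\]
Setting $\nabla g(\rho)=0$ reproduces \eqref{eq:equalitycondition1} after canceling the common factor $p$.

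The main subtlety lies in justifying the differentiability and the open-domain structure for the stationarity argument: the operator power $x^{p-1}$ is singular at $x=0$, so one needs both $\sigma^{-\alpha/2}\rho\,\sigma^{-\alpha/2}$ and $\l(\sigma)^{-\alpha/2}\l(\rho)\l(\sigma)^{-\alpha/2}$ to be strictly positive. This is supplied by the paper's blanket assumption that all density matrices are full-rank, which also places $\rho$ in the interior of the positive definite cone; no Lagrange multipliers or trace-preservation constraints are then needed and the proof closes in a handful of lines.
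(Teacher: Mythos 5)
Your proof is correct, but it takes a genuinely different route from the paper. The paper restricts to the partial trace via Stinespring, writes $\D_\alpha$ through the Araki--Masuda variational formula over the relative modular operator, and derives necessity from the \emph{uniqueness of the optimizer} in that variational problem (condition \circled{2} forces $a_*^2\otimes\id=b_*^2$, which is then massaged into \eqref{eq:equalitycondition1} via geometric-mean identities). You instead work directly with the closed-form quantity $\tilde Q_\alpha(\tau\|\sigma)=\tr(\sigma^{-\alpha/2}\tau\sigma^{-\alpha/2})^{p}$ and obtain necessity from first-order stationarity of the DPI gap $g(\tau)$ at the interior point $\tau=\rho$; sufficiency in both treatments is the same one-line trace pairing. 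What your approach buys: it handles a general CPTP map without dilation, needs no relative modular operators, and isolates the fact that \eqref{eq:equalitycondition1} is literally the Euler--Lagrange equation $\nabla g(\rho)=0$ of the data-processing inequality viewed as a variational statement in the first argument. What it gives up: the paper's route is engineered so that the same chain of (in)equalities \circled{1} and \circled{2} also yields the recoverability statement (Theorem \ref{theorem:recoverability'}) and the new conditions of Theorem \ref{theorem:newequality'}, and it transfers to the von Neumann algebra setting, whereas your gradient argument is tied to the finite-dimensional trace formula. Two small points to tighten: (i) the identity $\tr(\rho^{1/2}\sigma^{-\alpha}\rho^{1/2})^{p}=\tr(\sigma^{-\alpha/2}\rho\sigma^{-\alpha/2})^{p}$ for non-integer $p$ is not cyclicity per se but the isospectrality of $X^{*}X$ and $XX^{*}$ with $X=\sigma^{-\alpha/2}\rho^{1/2}$; (ii) full-rankness of $\rho,\sigma$ does \emph{not} make $\l(\sigma)$ invertible on the output space, so you should note that $\l(\rho)$ and $\l(\tau)$ for $\tau$ near $\rho$ satisfy $c_1\l(\sigma)\le\l(\tau)\le c_2\l(\sigma)$, hence everything restricts to the support of $\l(\sigma)$ where the power functions are smooth and the differentiation is legitimate. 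With those remarks the argument closes.
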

The proof relies on a variational formula of $\D_\alpha$ introduced in~\cite{frank2013monotonicity}. Below, we will use a different variational formula, and it will turn out to provide an alternative, simple proof of the above equality conditions. The physical meaning of an algebraic equality  such as the one above is, however, obscure. The more operational 
perspective on saturating the DPI is related to the topic of recoverability.  It says that the saturation of the DPI is equivalent to the existence of a quantum channel $\mc R$ such that $\mc R\circ \l(\rho)=\rho$ and $\mc R\circ\l(\sigma)=\sigma$, which is also known as the recoverability (sufficiency) of the quantum channel with respect to $\{\rho,\sigma\}$. The canonical example is that saturating the DPI for the quantum relative entropy for some $\{\l,\rho,\sigma\}$ is equivalent to recoverability of the triple.

Indeed, Petz analyzed the case of equality for the relative entropy in~\cite{petz2003monotonicity}, and found the following equality condition for $\l$ being a partial trace $\tr_B$:
\begin{equation}\label{eq:relativeentequality}
\sigma_{AB}^{\beta}\rho_{AB}^{-\beta} = \sigma_A^{\beta}\rho_{A}^{-\beta},\quad\forall\beta\in\mathbb{C}.
\end{equation}
By choosing $\beta=-\frac12$ and taking the modulus square on both sides, one obtains
\begin{equation}
\sigma_{AB}^{-\frac12}\rho_{AB} \sigma_{AB}^{-\frac12} = \sigma_A^{-\frac12}\rho_{A}\sigma_A^{-\frac12},
\end{equation}
which can be rearranged as
\begin{equation}
\rho_{AB} = \sigma_{AB}^{\frac12}\sigma_A^{-\frac12}\rho_{A}\sigma_A^{-\frac12}\sigma_{AB}^{\frac12} := \R_{\sigma,\tr_B}(\rho_A),
\end{equation}
where $\R_{\sigma,\l}$ is the \emph{Petz recovery map}, which is generally defined as
\begin{equation}\label{eq:petzmap}
\R_{\sigma,\l}(\cdot) = \sigma^\frac12\l^*\left(\l(\sigma)^{-\frac12} \cdot \l(\sigma)^{-\frac12}\right)\sigma^\frac12.
\end{equation} 
Here, $\Lambda^*$ denotes the adjoint channel to $\Lambda$. By definition the Petz map satisfies $\R_{\sigma,\l}(\sigma)=\sigma,$ so this proves that the saturation of the DPI for the quantum relative entropy is equivalent to recoverability. In fact, (\ref{eq:relativeentequality}) holds for the more general class of quantum $f$-divergences as well, so we know that the recoverability can also be characterized by the equality in their DPI's.
  
Let us now come back to the sandwiched R\'enyi divergence. Unfortunately, the equality condition (\ref{eq:equalitycondition1}) itself does not seem to hint at the recoverability of the quantum channel.  
Nevertheless, the problem was recently resolved by Jen\v cov\'a in two separate works dealing with the range $\alpha \in (-1,0)$ and $\alpha \in (0,1)$ respectively \cite{jenvcova2017preservation,jenvcova2017r}, yielding the following theorem:
\begin{thm}[Jen\v cov\'a]\label{theorem:recoverability}
	For $\alpha\in(-1,0)\cup(0,1)$, density matrices $\sigma,\rho$ and a quantum channel $\l$, $\l$ is recoverable with respect to $\{\rho,\sigma\}$ if and only if $\D_\alpha(\rho||\sigma) = \D_\alpha(\l(\rho)||\l(\sigma))$.
\end{thm}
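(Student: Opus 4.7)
The ``if'' direction is immediate from two applications of the DPI: if $\R$ recovers both $\rho$ and $\sigma$ from $\l(\rho)$ and $\l(\sigma)$, then $\D_\alpha(\l(\rho)\|\l(\sigma))\ge \D_\alpha(\R\l(\rho)\|\R\l(\sigma))=\D_\alpha(\rho\|\sigma)$ by DPI applied to $\R$, and combining with the DPI for $\l$ itself forces equality. The content is therefore entirely in the reverse implication.

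For the reverse direction, my plan is to use Theorem~\ref{theorem:equality} as a stepping stone: equality in the DPI yields the algebraic identity~\eqref{eq:equalitycondition1}, so the task reduces to showing that this identity implies $\rho=\R_{\sigma,\l}(\l(\rho))$. The first useful observation is that at $\alpha=1/2$ the exponent $\alpha/(1-\alpha)$ equals $1$, so the inner power in~\eqref{eq:equalitycondition1} collapses and what remains is
\[
\sigma^{-1/2}\rho\sigma^{-1/2}=\l^*\bigl(\l(\sigma)^{-1/2}\l(\rho)\l(\sigma)^{-1/2}\bigr),
\]
which is precisely the Petz recovery identity $\rho=\R_{\sigma,\l}(\l(\rho))$ after conjugation by $\sigma^{1/2}$. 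So the case $\alpha=1/2$ is handled for free, and the real work is to transport~\eqref{eq:equalitycondition1} from an arbitrary value of $\alpha$ to the value $1/2$.

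For that transport, I would rephrase~\eqref{eq:equalitycondition1} in the relative-modular-operator framework already in play in the paper. With $V$ a Stinespring isometry of $\l$ and with $\Delta_{\rho,\sigma}$, $\Delta_{\l(\rho),\l(\sigma)}$ the relative modular operators on source and target, I expect~\eqref{eq:equalitycondition1} to be equivalent to an identity of the schematic form $f_\alpha(\Delta_{\rho,\sigma})\,\xi_\sigma=V^{*}\,f_\alpha(\Delta_{\l(\rho),\l(\sigma)})\,V\xi_\sigma$ for a specific real power function $f_\alpha(x)=x^{\alpha/(2(1-\alpha))}$ acting on a GNS-type vector $\xi_\sigma$ representing $\sigma$. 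Petz's classical characterization of sufficiency then tells us that a single identity of this form, combined with analyticity of $z\mapsto \Delta^{z}\xi_\sigma$, extends to all complex powers $z$ and in particular to the Connes cocycle $\Delta^{it}$; specialising the extended relation back to $z=1/2$ would deliver the Petz recovery identity and close the argument.

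The principal obstacle is precisely this analytic-continuation step. Equation~\eqref{eq:equalitycondition1} involves the nonlinear $A\mapsto A^{p-1}$ functional calculus applied separately to $\sigma^{-\alpha/2}\rho\sigma^{-\alpha/2}$ and to $\l(\sigma)^{-\alpha/2}\l(\rho)\l(\sigma)^{-\alpha/2}$, so recasting it as an identity between entire operator-valued functions of a single complex parameter requires some care with domains and with the fact that the two sides are not obviously analytic in $\alpha$. I also expect the regimes $\alpha\in(0,1)$ and $\alpha\in(-1,0)$ to need slightly different treatment, since $p=1/(1-\alpha)$ crosses $1$ there and the underlying noncommutative $L_p$-geometry changes qualitatively.
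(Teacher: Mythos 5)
Your ``if'' direction and your observation that \eqref{eq:equalitycondition1} collapses to the Petz identity at $\alpha=1/2$ are both correct. The reverse direction for general $\alpha$, however, has a genuine gap, and in fact two of the premises of your ``transport'' plan are wrong. First, \eqref{eq:equalitycondition1} is \emph{not} equivalent to an identity of the form $f_\alpha(\Delta_{\rho,\sigma})\xi_\sigma=V^*f_\alpha(\Delta_{\l(\rho),\l(\sigma)})V\xi_\sigma$: the sandwiched divergence is an \emph{optimized} $f$-divergence, and the relative modular operator that actually appears is $\Delta_{\sigma,\omega_*}$ with $\omega_*\propto(\rho^{1/2}\sigma^{-\alpha}\rho^{1/2})^{1/(1-\alpha)}$ the optimizer of \eqref{eq:optimizer} --- not $\Delta_{\sigma,\rho}$ (which would give the Petz R\'enyi divergence instead). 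This is precisely why the sandwiched case is harder than the relative-entropy case, and why Jen\v cov\'a's original route first establishes sufficiency for an auxiliary pair $\{\tau,\sigma\}$ and then invokes a factorization theorem. Second, the extension mechanism you invoke does not work: two functions analytic in $z$ that agree at a \emph{single} value of $z$ need not agree anywhere else, so ``analyticity of $z\mapsto\Delta^z\xi_\sigma$'' cannot upgrade one identity at $z=\alpha/(2(1-\alpha))$ to the Connes-cocycle family. What actually produces the full one-parameter family is a positivity argument over a \emph{continuum} of parameters: writing $\mathrm{sign}(\alpha)x^{-\alpha}$ via the integral representation \eqref{eq:integral}, operator convexity of $(x+t)^{-1}$ gives a family of positive operators $X_t\geq 0$ whose integrated expectation vanishes when the DPI is saturated, forcing $X_t\ket{\rho^{1/2}}=0$ for \emph{all} $t\geq 0$, and only then does one obtain $U\Delta^\beta_{\sigma_A,\rho_A^{1/2}a_*^2\rho_A^{1/2}}(\rho_A^{1/2})=\Delta^\beta_{\sigma_{AB},\cdots}(\rho_{AB}^{1/2})$ for all $\beta$.

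There is also a structural mismatch with your choice of starting point. Theorem~\ref{theorem:equality} is the consequence of equality \circled{2} in \eqref{eq:equality} (uniqueness of the optimizer), whereas the recoverability statement is extracted from equality \circled{1} (saturation of Jensen's operator inequality); the paper stresses that these two conditions imply each other only indirectly. Deriving recoverability \emph{from} the algebraic identity of Theorem~\ref{theorem:equality} for general $\alpha$, as you propose, is therefore not a short reduction --- it would amount to re-proving the equivalence of \circled{1} and \circled{2}. The route that actually closes the argument works directly from \circled{1}: after establishing \eqref{eq:pickup} one sets $\beta=-\tfrac12$, and a polar-decomposition manipulation (the $L|L|^{-1}$ step) eliminates the optimizer $a_*$ entirely, yielding $\sigma_{AB}^{1/2}(\sigma_A^{-1/2}\rho_A\sigma_A^{-1/2}\otimes\id_B)\sigma_{AB}^{1/2}=\rho_{AB}$, i.e.\ Petz recovery of $\rho$ itself rather than of the auxiliary state. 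Your worry about treating $\alpha\in(0,1)$ and $\alpha\in(-1,0)$ separately dissolves in this approach, since the two integral representations in \eqref{eq:integral} handle both signs uniformly.
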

In fact Jen\v cov\'a's results are more general than the above theorem, in particular they also cover the setting of general, finite-dimensional von~Neumann algebras. 
To obtain these results, she first identified the recoverability for a different pair $\{\tau,\sigma\}$ where $\tau$ is some other state composed by $\rho,\sigma$. Then the recoverability for $\rho,\sigma$ is implied by a universal factorization structure for these recoverable states \cite{petz1988sufficiency,hayden2004structure,jenvcova2006sufficiency}. Her arguments make use of an interpolating family of $L_p$-norms in the general setting of non-commutative $L_p$-spaces, and different techniques are needed to tackle the two $\alpha$ ranges separately. In this work, we demonstrate that one can in fact directly obtain the Petz recovery map by analyzing the DPI equality conditions of the sandwiched R\'enyi divergence \`a la Petz \cite{petz2003monotonicity}, and it gives us an elementary recoverability proof for the entire range $\alpha\in (-1,0)\cup(0,1)$. 

Following this line of thought, we also find new algebraic conditions that characterize the equality in DPI, summarized in the following theorem.
\begin{thm}\label{theorem:newequality}
For $\alpha\in(-1,0)\cup(0,1)$, density matrices $\sigma,\rho$ and a CPTP map $\Lambda$, we have $\D_\alpha(\sigma||\rho) = \D_\alpha(\Lambda(\sigma)||\Lambda(\rho))$ if and only if for all $\beta\in\mathbb{C}$ and $\alpha\in(0,-1)\cup(0,1)$
\begin{equation}\label{eq:newresult}
\sigma^\beta(\rho\sigma^{-\alpha})^{\frac{\beta}{\alpha-1}}= \l^*\left(\l(\sigma)^\beta[\l(\rho)\l(\sigma)^{-\alpha}]^{\frac{\beta}{\alpha-1}}\right)\,.
\end{equation}
\end{thm}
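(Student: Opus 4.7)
My plan is to leverage Theorems~\ref{theorem:equality} and~\ref{theorem:recoverability}. One direction, from \eqref{eq:newresult} to DPI saturation, will follow by specializing $\beta$ to a single distinguished value; the converse, from DPI equality to the full one-parameter family \eqref{eq:newresult}, will require the structural consequences of Jen\v cov\'a's recoverability theorem.

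For the first direction I specialize to $\beta=-\alpha$, which makes the exponent $\beta/(\alpha-1)$ equal to $\alpha/(1-\alpha)$---exactly the exponent appearing in the LRD condition \eqref{eq:equalitycondition1}. The operators $\rho\sigma^{-\alpha}$ and $\sigma^{-\alpha/2}\rho\sigma^{-\alpha/2}$ are related by similarity,
\[ \rho\sigma^{-\alpha}\;=\;\sigma^{\alpha/2}\bigl(\sigma^{-\alpha/2}\rho\sigma^{-\alpha/2}\bigr)\sigma^{-\alpha/2}, \]
so holomorphic functional calculus yields
\[ \sigma^{-\alpha}\bigl(\rho\sigma^{-\alpha}\bigr)^{\alpha/(1-\alpha)}\;=\;\sigma^{-\alpha/2}\bigl(\sigma^{-\alpha/2}\rho\sigma^{-\alpha/2}\bigr)^{\alpha/(1-\alpha)}\sigma^{-\alpha/2}, \]
and the analogous identity holds for $\l(\rho),\l(\sigma)$. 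Substituting on both sides of \eqref{eq:newresult} at $\beta=-\alpha$ converts it exactly into \eqref{eq:equalitycondition1}, whence Theorem~\ref{theorem:equality} delivers DPI saturation.

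For the converse, I will assume DPI equality and invoke Theorem~\ref{theorem:recoverability} to obtain a CPTP recovery map $\R$ with $\R\l(\rho)=\rho$ and $\R\l(\sigma)=\sigma$. The structure theorem for sufficient channels~\cite{petz1988sufficiency,hayden2004structure,jenvcova2006sufficiency} then forces decompositions $\H_{\mathrm{in}}=\bigoplus_i\H_i^L\otimes\H_i^R$ and $\H_{\mathrm{out}}=\bigoplus_i\K_i^L\otimes\H_i^R$, isometries $V_i:\H_i^L\hookrightarrow\K_i^L$, and auxiliary states $\omega_i^R$ such that
\[ \rho=\bigoplus_i p_i\,\rho_i^L\otimes\omega_i^R,\qquad \sigma=\bigoplus_i q_i\,\sigma_i^L\otimes\omega_i^R, \]
and $\l$ acts blockwise as $X_i^L\otimes Y_i^R\mapsto V_iX_i^LV_i^*\otimes Y_i^R$. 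Writing $\gamma:=\beta/(\alpha-1)$, a direct expansion shows that both $\sigma^\beta(\rho\sigma^{-\alpha})^\gamma$ and $\l(\sigma)^\beta(\l(\rho)\l(\sigma)^{-\alpha})^\gamma$ are block-diagonal; the auxiliary $\omega_i^R$ factors collapse to the identity precisely because $\beta+(1-\alpha)\gamma=0$, and the scalar prefactors reduce to $(p_i/q_i)^\gamma$ on both sides. Applying $\l^*$ on the right-hand side then undoes the conjugations by $V_i$ via $V_i^*V_i=I$, so the two sides of \eqref{eq:newresult} coincide for every $\beta\in\mathbb{C}$.

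The crux of the proof is this converse direction. A single algebraic identity, \eqref{eq:equalitycondition1}, cannot by itself determine a holomorphic one-parameter family over $\mathbb{C}$, so one cannot simply analytically continue from the LRD exponent to general $\beta$---the rigid factorization forced by sufficiency is essential. An appealing alternative, should the variational formula for $\D_\alpha$ introduced later in the paper permit it, would be to read \eqref{eq:newresult} directly off a family of Euler--Lagrange stationarity conditions with $\beta$ playing the role of a natural scaling parameter, thereby bypassing both the structure theorem and the intermediate appeal to Theorem~\ref{theorem:recoverability}.
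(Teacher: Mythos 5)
Your proposal is logically sound, but it takes a genuinely different route from the paper, and one that runs against the paper's own architecture. The forward direction ($\eqref{eq:newresult}\Rightarrow$ DPI equality) via $\beta=-\alpha$ and the similarity $\rho\sigma^{-\alpha}=\sigma^{\alpha/2}\bigl(\sigma^{-\alpha/2}\rho\sigma^{-\alpha/2}\bigr)\sigma^{-\alpha/2}$ is correct and is essentially the content of Remark~\ref{rem:generalization}; the paper itself gets sufficiency even more cheaply, by setting $\beta=-1$, left-multiplying by $\sigma$ and taking the trace, which reproduces the equality of $\tr\bigl(\rho^{1/2}\sigma^{-\alpha}\rho^{1/2}\bigr)^{1/(1-\alpha)}$ on both sides without invoking Theorem~\ref{theorem:equality} at all. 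The real divergence is in the converse. You route it through Theorem~\ref{theorem:recoverability} and the factorization structure of sufficient channels; the computation you sketch (the $\omega_i^R$ factors cancelling because $\beta+(1-\alpha)\gamma=0$, the $V_i$ conjugations undone by $\l^*$) does check out, granting the structure theorem in the form you state (strictly, $\l$ need only act as $\mathrm{Ad}_{V_i}\otimes\l_i^R$ with $\l_i^R$ an arbitrary channel on the $R$ factor, but since $\l_i^R$ is trace-preserving its adjoint is unital and your cancellation survives). However, this imports exactly the heavy machinery --- Jen\v cov\'a's theorem plus the sufficiency structure theorem --- that the paper is written to avoid, and it is circular relative to the paper's logic, since the paper \emph{derives} Theorem~\ref{theorem:recoverability'} from the same saturation conditions. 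The paper instead obtains \eqref{eq:newresult'} directly: saturation of the Jensen step $\circled{1}$ gives, via the integral representation \eqref{eq:integral} and Petz's resolvent argument, the identity $U\Delta^{\beta}_{\sigma_A,\rho_A^{1/2}a_*^2\rho_A^{1/2}}(\rho_A^{1/2})=\Delta^{\beta}_{\sigma_{AB},\rho_{AB}^{1/2}(a_*^2\otimes\id)\rho_{AB}^{1/2}}(\rho_{AB}^{1/2})$ for all $\beta\in(-1,0)\cup(0,1)$, and saturation of the optimizer step $\circled{2}$ lets one replace $a_*^2\otimes\id$ by $b_*^2$; expanding the relative modular powers yields \eqref{eq:newresult'} on a real interval of $\beta$.

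This also corrects the premise of your closing remark: while a single algebraic identity at one exponent indeed cannot determine a holomorphic family, the direct method produces the identity for a whole interval of $\beta$, and since $\beta\mapsto X^{\beta}$ is entire for $X>0$, analytic continuation to all of $\mathbb{C}$ \emph{is} then legitimate --- no factorization theorem is needed. So there is no gap in your argument as a derivation from the cited literature, but the self-contained, elementary proof you suspected might exist via the variational/modular-operator route is precisely the one the paper gives.
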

Note that the matrix power $(\rho\sigma^{-\alpha})^{\frac{1}{\alpha-1}}$ is well-defined as the $\rho\sigma^{-\alpha}$ has positive eigenvalues. Our Theorem \ref{theorem:newequality} is a generalization of Theorem \ref{theorem:equality}, as the former implies the latter, c.f. Remark \ref{rem:generalization}. We also see that (\ref{eq:newresult}) reduces to (\ref{eq:relativeentequality}) in the limit $\alpha\rightarrow 0$. Since both equalities in the DPI's for the relative entropy and the sandwiched R\'enyi divergence lead to the recoverability statement, (\ref{eq:newresult}) is equivalent to (\ref{eq:relativeentequality}). We however cannot directly obtain (\ref{eq:relativeentequality}) from our analysis for values of $\alpha$ different from $\alpha=0$.

Before starting with the technical part of our paper, let us briefly comment on the topic of \emph{approximate} recovery: A seminal result by Fawzi and Renner \cite{fawzi2015quantum} and follow-up works \cite{Sutter2016a,wilde2015recoverability,Sutter2016b,Sutter2016,junge2018universal,carlen2020recovery,faulkner2020approximate} showed that when the DPI of the quantum relative entropy is close to being saturated, then $\rho$ may be recovered from $\Lambda(\rho)$ \emph{approximately} by a recovery-channel $\mc R_{\sigma,\Lambda}$ that at the same time \emph{exactly} recovers $\sigma$ from $\Lambda(\sigma)$. In this sense, recoverability is \emph{robust}. Very recently, Gao and Wilde \cite{gao2020recoverability} obtained results on robust recovery for optimized quantum $f$-divergences, which include the sandwiched R\'enyi divergence, also using the relative modular operator approach. These naturally contain the case of exact recovery as a special case, but their proof is much more demanding than the special case of equality that we treat in this work.

\emph{Notation.} We denote the set of bounded operators on a Hilbert space $\H$ as $\B(\H)$ and the set of unit-trace, positive (semi-)definite operators on $\H$ as $\P_+(\H)\ (\P(\H))$ and CPTP maps as $\l\in\mathrm{CPTP}(\H,\K)$. Given a Hilbert-space $\mc H$ and a density matrix $\rho$ with orthonormal eigenbasis $\{\ket{j}\}_j$, we define the vector $\ket{\id} := \sum_j \ket{j}\otimes\ket{j}$. To any operator $a\in \B(\H)$ we associate the vector $\ket{a} := a\otimes\id \ket{\id}$, providing an isomorphism between $\B(\H)$ with Hilbert-Schmidt inner product and $\H\otimes \H$: $\braket{a}{b}=\tr[a^\dagger b]$. 
The vector $\ket{\rho^\frac12}$ is called the \emph{canonical purification} of $\rho$. Since $\rho>0$, any vector $\ket{\Psi}\in\mc H\otimes \mc H$ may be written as $a_\Psi\otimes\mathbb I \ket{\rho^\frac12}$.  
If $\Lambda:\B(\H)\rightarrow\B(\K)$ is a super-operator, it induces an operator $\hat \Lambda$ on $\H\otimes\H$ as $\hat \Lambda\ket{a} = \ket{\Lambda(a)}$, where we assume implicitly that a choice of reference density matrix is also given for $\K$. In the following, we will often identify $\hat \Lambda$ and $\Lambda$.  
Finally, if we talk about bipartite systems consisting of two parts $A$ and $B$ with Hilbert space $\H_A$ and $\H_B$, respectively, we will often identify $A$ with $\H_A$ etc. Finally, we will oftentimes omit subscripts as labels for sub-systems from operators if they are clear from context.

\section{Preliminaries}\label{sec:prelim}
\subsection{R\'enyi divergences via the relative modular operator}
Denote by $L_\sigma$ and $R_{\rho^{-1}}$ the super-operators for left- and right-multiplication by $\sigma$ and $\rho^{-1}$, respectively. The relative modular super-operator is then defined as
\begin{equation}
	\Delta_{\sigma,\rho}:= L_\sigma R_{\rho^{-1}}.
\end{equation}
Petz used it to define the quasi-entropy, also known as the quantum $f$-divergence, as \cite{hiai2011quantum}
\begin{equation}
	S_f(\rho||\sigma) := \bra{\rho^\frac12} f(\Delta_{\sigma,\rho}) \ket{\rho^\frac12}
\end{equation}
where $f$ is an operator convex function and $\ket{\rho^\frac12}$ denotes the canonical purification of $\rho$. 
The DPI of this quantity follows from operator Jensen's inequality and the convexity of $f$.

Our analysis of the sandwiched R\'enyi divergences is based on the technique of relative modular opeartor \`a la Petz. It's useful to introduce it via briefly representing the proof for the relative entropy DPI in the special case where the quantum channel corresponds to the partial trace $\tr_B$. The relative entropy corresponds to the choice $f=-\log.$
The proof makes use of the following super-operators:
\begin{align}
	\Delta_{\sigma,\rho}^A (\cdot)  = \sigma_A \cdot\rho_A^{-1},\quad \Delta_{\sigma,\rho}^{AB}(\cdot) = \sigma_{AB}\cdot\rho_{AB}^{-1},\quad U(\cdot) =\,\cdot\, \rho_A^{-\frac12}\otimes \id_B\rho_{AB}^\frac12,\quad U^*(\cdot)= \tr_B[\,\cdot\, \rho_{AB}^\frac12] \rho_A^{-\frac12}\nonumber
\end{align}
where $^*$ denotes either the adjoint with respect to the Hilbert-Schmidt inner product or the adjoint on operators on Hilbert-space, depending on context. The operator $U$ is an isometry from $A$ to $AB$.
We therefore have $U^*U=\mathbf 1_A$ and define the projector $P:=UU^*$ on the system $AB$. Here, $\mathbf 1_A$ denotes the identity map on (not in) the algebra of operators on $A$. 
By construction
\begin{align}
U^*\Delta_{\sigma,\rho}^{AB}U=\Delta_{\sigma,\rho}^A. 
\end{align}
The quantum relative entropy then simply takes the form
\begin{align}
	-\mathrm{Tr}[\rho_A^\frac12\log(\Delta_{\sigma,\rho}^A)(\rho_A^\frac12)] = -\mathrm{Tr}[\rho_A^\frac12\log(\sigma_A)\rho_A^\frac12] + \mathrm{Tr}[\rho_A\log(\rho_A)] = D(\rho_A\|\sigma_A)
\end{align}
where we used that $\log(\Delta_{\sigma,\rho}^A)(a) = \log(\sigma_A)a - a\log(\rho_A)$.
We now make use of Jensen's operator inequality, which states that $f(U^*aU)\leq U^*f(a)U$ for a convex function $f$ and self-adjoint $a$ \footnote{This is not to be confused with the statement $f(UbU^*) = Uf(b)U^*$.}.
Setting $a=\Delta_{\sigma,\rho}^{AB}, f=-\log$, we then get the DPI:
\begin{equation}
\begin{aligned}
	D(\rho_A\|\sigma_A) &= -\mathrm{Tr}[\rho_A^\frac12\log(U^*\Delta_{\sigma,\rho}^{AB}U)(\rho_A^\frac12)] \leq -\mathrm{Tr}[\rho_A^\frac12U^*\log(\Delta_{\sigma,\rho}^{AB})U(\rho_A^\frac12)],\\
	&= -\mathrm{Tr}[\rho_{AB}^\frac12\log(\Delta_{\sigma,\rho}^{AB})(\rho_{AB}^\frac12)] = D(\rho_{AB}\| \sigma_{AB}).
\end{aligned}
\end{equation}
The only inequality used in the proof is the operator Jensen's inequality. In Appendix~\ref{app:jensenequality} we show that the equality condition for Jensen's inequality (which implies saturation of the DPI) directly gives rise to the  Petz recovery map. 

\subsection{Sandwiched R\'enyi divergence}
Let us now return to the sandwiched R\'enyi divergence. Since we wish to use the machinery of relative modular operators, it's desirable to have an expression for the sandwiched R\'enyi divergences based on relative modular operators. We would therefore like to rewrite $\D_\alpha$ using a variational formula that involves the relative modular operator. The Araki-Masuda norms of the $L_p$ spaces~\cite{araki1982positive} will be useful. For a density operator $\rho$ and a reference density operator $\sigma$, the Araki-Masuda $p$-norms are defined as
\begin{equation}
\begin{aligned}
\norm{\rho}_{p,\sigma} := &\sup_{\omega\in\P_+(\H)}\norm{\Delta^{\frac1p-\frac12}_{\sigma,\omega}\ket{\rho^\frac12}}_2, \,\,\,p\in(2,\infty) ,\\
\norm{\rho}_{p,\sigma} := &\inf_{\omega\in\P_+(\H)}\norm{\Delta^{\frac1p-\frac12}_{\sigma,\omega}\ket{\rho^\frac12}}_2, \,\,\,p\in[1,2).
\end{aligned}
\end{equation}
One can evaluate the maximizers and minimizers explicitly,
\begin{equation}
\begin{aligned}
\norm{\rho}^2_{p,\sigma}=&\sup_{\omega\in\P_+(\H)}\bra{\rho^\frac12}\Delta^{\frac2p-1}_{\sigma , \omega}\ket{\rho^\frac12}=\sup_{\omega\in\P_+(\H)}\tr \rho^\frac12\sigma^{\frac2p-1}\rho^\frac12\omega^{1-\frac2p} = \sup_{\omega\in\P_+(\H)}\tr Y \omega^{1-\frac2p},\,\,\,p\in(2,\infty),\\
\norm{\rho}^2_{p,\sigma}=&\inf_{\omega\in\P_+(\H)}\bra{\rho^\frac12}\Delta^{\frac2p-1}_{\sigma , \omega}\ket{\rho^\frac12}=\inf_{\omega\in\P_+(\H)}\tr \rho^\frac12\sigma^{\frac2p-1}\rho^\frac12\omega^{1-\frac2p} = \inf_{\omega\in\P_+(\H)}\tr Y \omega^{1-\frac2p}  ,\,\,\,\,\,p\in[1,2).
\end{aligned}
\end{equation}
where we have defined $Y=\rho^\frac12\sigma^{\frac2p-1}\rho^\frac12>0$. The following Lemma shows that the optimizer is unique.
\begin{lem}
For $p\in [1,2)$ and $(2,\infty),$ the function $\omega\mapsto \tr Y \omega^{1-\frac2p}$ is strictly convex and concave, respectively. 
\end{lem}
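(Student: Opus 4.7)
The plan is to establish strict convexity/concavity on $\P_+(\H)$ by a direct second-derivative computation along line segments. Setting $s:=1-2/p$, the claim is that for any $\omega\in\P_+(\H)$ and any nonzero Hermitian $h$ with $\omega+th>0$ in a neighborhood of $t=0$, the scalar function $t\mapsto \tr Y(\omega+th)^s$ has a strictly positive second derivative at $t=0$ when $s\in[-1,0)$ (i.e.\ $p\in[1,2)$) and a strictly negative one when $s\in(0,1)$ (i.e.\ $p\in(2,\infty)$). This is equivalent to the stated strict convexity/concavity.

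For the computation I would use the integral representations $x^s=\tfrac{\sin(s\pi)}{\pi}\int_0^\infty \lambda^{s-1}\tfrac{x}{x+\lambda}\,d\lambda$ for $s\in(0,1)$ and $x^{-r}=\tfrac{\sin(r\pi)}{\pi}\int_0^\infty \lambda^{-r}(x+\lambda)^{-1}\,d\lambda$ for $r\in(0,1)$, extended to operators via functional calculus; the endpoint $s=-1$ needs no integral. Writing $G_\lambda(t):=(\omega+th+\lambda)^{-1}$, using $\dot G_\lambda=-G_\lambda h G_\lambda$ and the identity $\omega G_\lambda(0)=\mathbf 1-\lambda G_\lambda(0)$, a short calculation gives
\begin{equation}
\frac{d^2}{dt^2}\tr Y(\omega+th)^s\bigg|_{t=0} \;=\; c_s\int_0^\infty \lambda^{a(s)}\,\tr\bigl(Y\,G_\lambda h G_\lambda h G_\lambda\bigr)\,d\lambda,
\end{equation}
with $c_s=-2\sin(s\pi)/\pi$, $a(s)=s$ for $s\in(0,1)$, and $c_s=+2\sin(r\pi)/\pi$, $a(s)=-r$ for $s=-r\in(-1,0)$.

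It then remains to pin down the sign of the integrand. Since $\rho,\sigma>0$, the operator $Y=\rho^{1/2}\sigma^{2/p-1}\rho^{1/2}$ is strictly positive, and so is $Z_\lambda:=G_\lambda Y G_\lambda$. Using cyclicity of the trace and setting $M_\lambda:=G_\lambda^{1/2}h Z_\lambda^{1/2}$, one finds
\begin{equation}
\tr\bigl(Y\,G_\lambda h G_\lambda h G_\lambda\bigr)=\tr\bigl(Z_\lambda h G_\lambda h\bigr)=\tr(M_\lambda^\dagger M_\lambda)\ \ge\ 0,
\end{equation}
with equality iff $M_\lambda=0$, which (because $G_\lambda,Z_\lambda>0$) is equivalent to $h=0$. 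Hence, whenever $h\neq 0$, the integrand is pointwise nonnegative and strictly positive on a set of positive $\lambda$-measure, so the second derivative has the claimed strict sign, giving strict convexity for $p\in[1,2)$ and strict concavity for $p\in(2,\infty)$.

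The only mildly delicate step is justifying the exchange of differentiation with the improper integrals, which are singular at $\lambda=0$ when $s<0$ and at $\lambda=\infty$ when $s>0$. This is routine by dominated convergence: with $\omega$ fixed and $t$ constrained to a small interval, $\|G_\lambda(t)\|\le(\lambda+\epsilon)^{-1}$ for some $\epsilon>0$ depending only on the minimal eigenvalue of $\omega$, so the relevant integrands are dominated by integrable functions of $\lambda$ uniformly in $t$.
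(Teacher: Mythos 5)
Your argument is correct, but it takes a genuinely different route from the paper's. The paper splits $Y=c\,\id+Y'$ with $c>0$ the smallest eigenvalue of $Y$ and $Y'\geq 0$: the $Y'$-term is (not necessarily strictly) convex or concave by operator convexity/concavity of $x\mapsto x^{1-2/p}$, while the remaining term $c\,\tr\,\omega^{1-2/p}$ is \emph{strictly} convex/concave by the imported fact that $X\mapsto\tr f(X)$ inherits strict convexity from a strictly convex scalar $f$; the sum of a convex and a strictly convex function is strictly convex. You instead differentiate twice along an arbitrary line segment $\omega+th$ using the resolvent integral representations (essentially the same ones the paper records in (\ref{eq:integral})), and reduce strictness to the identity $\tr\bigl(Y G_\lambda h G_\lambda h G_\lambda\bigr)=\tr(M_\lambda^\dagger M_\lambda)$ with $M_\lambda=G_\lambda^{1/2}hZ_\lambda^{1/2}$, which vanishes only for $h=0$ because $G_\lambda^{1/2}$ and $Z_\lambda^{1/2}$ are invertible. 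I checked the signs: for $s=1-2/p\in(0,1)$ the prefactor $-2\sin(s\pi)/\pi$ is negative (strict concavity, $p>2$), and for $s=-r\in[-1,0)$ the prefactor is positive (strict convexity, $p\in[1,2)$), with the endpoint $p=1$ handled directly as you note; the dominated-convergence justification is also as routine as you claim. The paper's proof is shorter but relies on two black-box facts; yours is self-contained and produces an explicit sign-definite Hessian, at the cost of some analytic bookkeeping. One cosmetic remark: when $h\neq 0$ your integrand is strictly positive for \emph{every} $\lambda>0$, not merely on a set of positive measure, so the final step is even cleaner than you state.
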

\begin{proof} Since $Y>0,$ we can split $Y=c\id+Y'$ where $c$ is the smallest eigenvalue of $Y$ and $Y'\geq 0$. We have
\begin{equation}
\tr Y \omega^{1-\frac2p} = \tr Y'\omega^{1-\frac2p} + c\tr\,\omega^{1-\frac2p}.
\end{equation}
	The first term can be written as $\tr Y^\frac12 \omega^{1-\frac2p} Y^\frac12$. It is operator convex and concave as the function $\omega\mapsto\omega^{1-\frac2p}$ is strictly operator convex and concave for $p\in [1,2)$ and $(2,\infty)$ respectively. The second term has the form $\tr f(X)$ where $f$ is a strictly convex (concave) function, so it is strictly convex and concave for $p\in [1,2)$ and $(2,\infty)$, respectively \cite{carlen2010trace}. Therefore, the sum of the two terms are strictly convex and concave for $p\in [1,2)$ and $(2,\infty)$ respectively. 
\end{proof}
The unique optimizer $\omega_*$ for $\norm{\rho}^2_{p,\sigma}$ is given by \cite{muller2013quantum}:
\begin{equation}\label{eq:optimizer}
\omega_*=(\tr Y^\frac{p}{2})^{-1}Y^\frac{p}{2}.
\end{equation}
Hence, the Araki-Masuda $p$-norms reduce to the $\sigma$-weighted $p$-norms~\cite{berta2018renyi}:
\begin{equation}
\norm{\rho}_{p,\sigma} = (\tr Y^\frac{p}{2})^\frac1p =\norm{Y^\frac12}_p = \norm{\sigma^{\frac1p-\frac12}\rho^\frac12}_p,
\end{equation}
where $\norm{\cdot}_p$ are the Schatten $p$-norms. 

In terms of the Araki-Masuda $p$-norms, the sandwiched quantum R\'enyi divergences can hence be defined as 
\begin{align}
	\D_\alpha(\rho||\sigma)&:=\frac{2}{\alpha}\log\norm{\rho}_{\frac{2}{1-\alpha},\sigma}, \,\,\,\alpha\in[-1,0)\cup (0,1).
\end{align}
Equivalently, we can write:
\begin{equation}\label{eq:definitionRenyi}
\begin{aligned}
\D_\alpha(\rho||\sigma)&:=\frac{1}{\alpha}\inf_{\omega\in\P_+(\H)}\log\bra{\rho^\frac12}\Delta_{\sigma,\omega}^{-\alpha}\ket{\rho^\frac12}\,,\quad\alpha\in[-1,0),\\
\D_\alpha(\rho||\sigma)&:=\frac{1}{\alpha}\sup_{\omega\in\P_+(\H)}\log\bra{\rho^\frac12}\Delta_{\sigma,\omega}^{-\alpha}\ket{\rho^\frac12}\,,\quad\,\alpha\in(0,1).
\end{aligned}
\end{equation}

Note that this is not a quantum $f$-divergence but rather an instance of the optimized quantum $f$-divergence introduced in \cite{wilde2018optimized}. If one removes the supremum and infimum in (\ref{eq:definitionRenyi}), then the above expressions define the Petz quantum R\'enyi divergences $\overline{D}_\alpha$:
\begin{equation}
\overline{D}_\alpha(\rho||\sigma):=\frac{1}{\alpha}\log\bra{\rho^\frac12}\Delta_{\sigma,\rho}^{-\alpha}\ket{\rho^\frac12}, \,\,\,\alpha\in[-1,0)\cup (0,1).
\end{equation}
The Petz quantum R\'enyi divergence is the exponent of the $f$-divergence $S_f(\rho||\sigma)$ with the operator convex function $f(x)=\text{sign}(\alpha)x^{-\alpha}$. To see why this power function function is operator convex, we adopt the following integral representation \cite{bhatia2013matrix}:
\begin{equation}\label{eq:integral}
\begin{aligned}
x^\alpha =& \frac{\sin(\pi\alpha)}{\pi}\int^\infty_0\dd t\, t^\alpha\left(\frac{1}{t}-\frac{1}{t+x}\right), \,\,\,\alpha\in(0,1),\\
	x^\alpha =& -\frac{\sin(\pi\alpha)}{\pi}\int^\infty_0\dd t\, \frac{t^\alpha}{t+x} \;\;,\;\;\;\;\;\;\;\;\;\;\;\;\alpha\in(-1,0).
\end{aligned}
\end{equation}

Since the function $f(x)=(x+t)^{-1}$ is operator convex and operator anti-monotone for $t\geq 0,$ the same holds for $\text{sign}(\alpha)x^{-\alpha}$ after integration. Note that the above formulae do not hold for the boundary cases $\alpha=\pm 1$. However, they are obviously operator convex/concave as well. The integral representations will be useful later for establishing the equality condition.

\subsection{The data-processing inequality}

Now let's turn to the data processing inequality and w.l.o.g. restrict the map to the partial trace. We will later come back to the general case, which follows from Stinepspring's dilation theorem. The DPI for the Petz quantum R\'enyi divergences with $\alpha\in [-1,0)\cup (0,1)$ follows directly from the operator convexity of the power function $f(x)=\text{sign}(\alpha)x^{-\alpha}$.

The DPI of the sandwiched R\'enyi divergences $\tilde{D}_\alpha$ involves more steps. 
Consider an arbitray vector $\ket{\omega^\frac12}_{AA'}\in \mathcal{H}_{AA'}$ which can be written as $\ket{a\rho^\frac12}_{AA'}$, for some $a\in\mathcal{B}(\mathcal{H}_A)$ (for simplicity of notation we here simply write $\rho$ and $\omega$ instead of $\rho_A$ and $\omega_A$). Here, as before, $\ket{\omega^\frac12}_{AA'}$ denotes the canonical purification of $\omega$ with the purifying system $A'$. Similarly, we write $\ket{\omega^\frac12}_{AA'BB'}=\ket{b\rho^\frac12}_{AA'BB'}$ for some $b\in\mathcal{B}(\mathcal{H}_{AB})$ (again identifying $\rho$ and $\rho_{AB}$ as well as $\omega$ and $\omega_{AB}$ from context). We can then write the relative modular operator $\Delta^A_{\sigma,\omega}$ as $\Delta^A_{\sigma,\rho^{1/2} |a|^2 \rho^{1/2}}.$ One can readily check that
\begin{equation}\label{eq:modular}
	U^*\Delta^{AB}_{\sigma_{AB},\rho_{AB}^{1/2} (|a|^2\otimes\mathbb I) \rho_{AB}^{1/2}}U=\Delta^A_{\sigma_{A},\rho_{A}^{1/2} |a|^2 \rho_{A}^{1/2}}\,.
\end{equation} 

 Let the maximizer for $\sup_a\bra{\rho^\frac12}\Delta^{A -\alpha}_{\sigma,\rho^{1/2}a^2\rho^{1/2}}\ket{\rho^\frac12}$ be $a_*.$ According to (\ref{eq:optimizer}), $a_*^2$ is given by the following Hermitian operator
\begin{equation}
	a_*^2= \frac{\rho_A^{-\frac12}\left(\rho_A^\frac12\sigma_A^{-\alpha}\rho_A^\frac12\right)^\frac{1}{1-\alpha}\rho_A^{-\frac12} }{ \tr\left(\rho_A^\frac12\sigma_A^{-\alpha}\rho_A^\frac12\right)^\frac{1}{1-\alpha}}.
\end{equation} 

First we apply the Jensen's operator inequality, the convexity of $f=\text{sign}(\alpha)x^{-\alpha}$ and $(\ref{eq:modular})$:
\begin{equation}
	\text{sign}(\alpha)\bra{\rho_A^\frac12}\Delta^{A -\alpha}_{\sigma_A,\rho_A^{1/2}a_*^2\rho_A^{1/2}}\ket{\rho_A^\frac12}\leq \text{sign}(\alpha)\bra{\rho_{AB}^\frac12}\Delta^{AB -\alpha}_{\sigma_{AB},\rho_{AB}^{1/2}(a_*^2\otimes \mathbb I)\rho_{AB}^{1/2}}\ket{\rho_{AB}^\frac12}.
\end{equation} 
Since vectors of the form $\ket{(a\otimes \mathbb I)\rho^\frac12}_{AA'BB'}$ constitute a restricted set of vectors in $\mathcal{H}_{AA'BB'},$ we have 
\begin{equation}\label{ineq:DPI}
\begin{aligned}
\text{sign}(\alpha)\bra{\rho_A^\frac12}\Delta^{A -\alpha}_{\sigma_A,\rho_A^{1/2}a_*^2\rho_A^{1/2}}\ket{\rho_A^\frac12}&\leq \text{sign}(\alpha)\bra{\rho_{AB}^\frac12}\Delta^{AB -\alpha}_{\sigma_{AB},\rho_{AB}^{1/2}(a_*^2\otimes \mathbb I)\rho_{AB}^{1/2}}\ket{\rho_{AB}^\frac12},\\ &\leq \text{sign}(\alpha)\bra{\rho_{AB}^\frac12}\Delta^{AB -\alpha}_{\sigma_{AB},\rho_{AB}^{1/2}b_*^2\rho_{AB}^{1/2}}\ket{\rho_{AB}^\frac12}.
\end{aligned}
\end{equation} 
where in the last step, we use the fact that $\rho^\frac12_{AB}b_*^2\rho^\frac12_{AB}$ is the supremum (infimum) in  (\ref{eq:definitionRenyi}) for $\alpha>0 \, (\alpha<0)$ respectively.
It then follows that
\begin{equation}
\begin{aligned}
	\tilde{D}_\alpha(\rho_A||\sigma_A)=\frac1{\alpha}\log\bra{\rho_A^\frac12}\Delta^{A -\alpha}_{\sigma_A,\rho_A^{1/2}a_*^2\rho_A^{1/2}}\ket{\rho_A^\frac12}
	&\leq\frac1{\alpha}\log \bra{\rho_{AB}^\frac12}\Delta^{AB -\alpha}_{\sigma_{AB},\rho_{AB}^{1/2}(a_*^2\otimes\mathbb{I})\rho_{AB}^{1/2}}\ket{\rho_{AB}^\frac12},\\
&\leq \frac1{\alpha}\log \bra{\rho_{AB}^\frac12}\Delta^{AB -\alpha}_{\sigma_{AB},\rho_{AB}^{1/2}b_*^2\rho_{AB}^{1/2}}\ket{\rho_{AB}^\frac12},\\
	&=\tilde D_\alpha(\rho_{AB}||\sigma_{AB}),
\end{aligned}
\end{equation} 
where 
\begin{equation}
	b_*^2=\frac {\rho_{AB}^{-\frac12}\left(\rho_{AB}^\frac12\sigma_{AB}^{-\alpha}\rho_{AB}^\frac12\right)^\frac{1}{1-\alpha}\rho_{AB}^{-\frac12}}{\tr\left(\rho_{AB}^\frac12\sigma_{AB}^{-\alpha}\rho_{AB}^\frac12\right)^\frac{1}{1-\alpha}}.
\end{equation}
Note that the denominator is simply given by $\exp(\frac{\alpha}{1-\alpha}\D_\alpha(\rho_{AB}||\sigma_{AB}))$.

\subsection{The geometric mean}
It will turn out to be convenient to use the geometric mean to compactify the equality conditions, so we shall briefly introduce some facts about it. The $\lambda$-weighted geometric mean between two positive matrices is defined as~\cite{bhatia2009positive}
\begin{equation}
A\sharp_\lambda B = A^\frac12(A^{-\frac12}BA^{-\frac12})^\lambda A^\frac12 .
\end{equation}
Usually, the geometric mean is defined for $\lambda\in [0,1]$, but the above RHS is well-defined for any $\lambda\in\mathbb{R}$. We shall use the notation for any real $\lambda$. 
The following properties will be useful later:
\begin{lem}\label{lem:mean}
For $A,B >0, \lambda\in\mathbb{R},$ the following are true:
	\begin{enumerate}
		\item	$A\sharp_\lambda B = B\sharp_{1-\lambda} A$,
		\item $(A\sharp_\lambda B)^{-1}=A^{-1}\sharp_\lambda B^{-1}$,
		\item $A\sharp_\lambda B = A (A^{-1}B)^{\lambda} = (AB^{-1})^{1-\lambda}B$.
	\end{enumerate}
\end{lem}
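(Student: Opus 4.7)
The plan is to establish property 3 first and then deduce properties 1 and 2 from it. The single tool I would use throughout is the \emph{similarity identity}: for any invertible matrix $X$ and analytic function $f$ (in particular, a fractional power applied to a matrix with positive spectrum), $f(XYX^{-1}) = Xf(Y)X^{-1}$.

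For property 3, apply the similarity identity with $X = A^{1/2}$ and $Y = A^{-1}B$, noting that $XYX^{-1} = A^{-1/2}BA^{-1/2}$. This gives $(A^{-1/2}BA^{-1/2})^\lambda = A^{1/2}(A^{-1}B)^\lambda A^{-1/2}$, and substituting into the definition $A\sharp_\lambda B = A^{1/2}(A^{-1/2}BA^{-1/2})^\lambda A^{1/2}$ immediately yields $A\sharp_\lambda B = A(A^{-1}B)^\lambda$. For the second form, write $(AB^{-1})^{1-\lambda}B = (AB^{-1})^{-\lambda}(AB^{-1})B = (BA^{-1})^\lambda A$, and then a second use of the similarity identity (now with $X = A$ and $Y = A^{-1}B$, so that $XYX^{-1} = BA^{-1}$) shows $(BA^{-1})^\lambda A = A(A^{-1}B)^\lambda$.

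Property 1 then drops out for free: swapping $A\leftrightarrow B$ and $\lambda\to 1-\lambda$ in the identity $A\sharp_\lambda B = (AB^{-1})^{1-\lambda}B$ of property 3 yields $B\sharp_{1-\lambda}A = (BA^{-1})^{\lambda}A$, which is already known to equal $A\sharp_\lambda B$. Property 2 I would attack directly: the operators $A^{-1/2}BA^{-1/2}$ and $A^{1/2}B^{-1}A^{1/2}$ are mutual inverses by a one-line computation, so $(A^{-1/2}BA^{-1/2})^{-\lambda} = (A^{1/2}B^{-1}A^{1/2})^{\lambda}$; inserting this into $(A\sharp_\lambda B)^{-1} = A^{-1/2}(A^{-1/2}BA^{-1/2})^{-\lambda}A^{-1/2}$ gives exactly the definition of $A^{-1}\sharp_\lambda B^{-1}$.

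There is no substantive obstacle here; the lemma is standard (cf.~\cite{bhatia2009positive}) and the proof amounts to a couple of applications of the similarity identity. The only conceptual point worth flagging is that property 3 involves fractional powers of the non-Hermitian operators $A^{-1}B$ and $AB^{-1}$, but these are well-defined precisely because the similarity identity shows them to be conjugate to the positive operator $A^{-1/2}BA^{-1/2}$, so their spectra are positive and the fractional power is unambiguous along the standard branch.
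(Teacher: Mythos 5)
Your proof is correct and complete; the paper itself states Lemma~\ref{lem:mean} without proof, treating it as standard (citing Bhatia), so there is nothing to diverge from. Your route via the similarity identity $f(XYX^{-1})=Xf(Y)X^{-1}$ is the standard verification, and you rightly flag the one genuinely necessary point, namely that the fractional powers of the non-Hermitian products $A^{-1}B$ and $AB^{-1}$ are well defined because these operators are conjugate to the positive operator $A^{-1/2}BA^{-1/2}$.
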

The geometric mean is particularly useful for us, since the optimizers $a_*^2,b_*^2$ may be written as geometric means with $\lambda=\frac{1}{1-\alpha} (=n)$:
\begin{equation}\label{eq:maximizers}
a_*^{-2}\propto\rho_A\sharp_{\frac{1}{1-\alpha}} \sigma_A^{\alpha},\quad b_*^{-2}\propto\rho_{AB}\sharp_{\frac{1}{1-\alpha}} \sigma_{AB}^{\alpha} .
\end{equation} 

\section{Equality conditions for the $\D_\alpha$ DPI}
We are now in a position to study the equality conditions for the sandwiched R\'enyi divergence.
Let's first simplify the situation by restricting to the partial trace map $\Lambda=\tr_B$.
We will now state the three theorems from the introduction in this simplified setting and give their proofs. 
First, using the geometric mean, we can state the result by Leditzky, Rouz\'e and Datta as follows.
\begin{manualtheorem}{1'}[Leditzky, Rouz\'e and Datta]\label{theorem:equality'}
	\emph{For $\alpha\in[-1,0)\cup(0,1)$, density matrices $\sigma_{AB},\rho_{AB},$ $\D_\alpha(\sigma_A||\rho_A) = \D_\alpha(\sigma_{AB}||\rho_{AB})$ if and only if}
\begin{equation}\label{eq:oldresult}
\rho_A\sharp_{\frac{1}{1-\alpha}} \sigma_A^{\alpha}\otimes\id_B = \rho_{AB}\sharp_{\frac{1}{1-\alpha}} \sigma_{AB}^{\alpha}.
\end{equation}
\end{manualtheorem}
To obtain Theorem~\ref{theorem:equality}, we simply take the inverse on both sides of (\ref{eq:oldresult}) and swap $\rho^{-1}$ and $\sigma^{-\alpha}$. Using Lemma \ref{lem:mean}, we have
\begin{equation}\label{eq:remark}
\sigma_A^{-\alpha}\sharp_{\frac{-\alpha}{1-\alpha}} \rho_A^{-1}=\sigma_{AB}^{-\alpha}\sharp_{\frac{-\alpha}{1-\alpha}} \rho_{AB}^{-1}\,,
\end{equation}
which is equivalent to (\ref{eq:equalitycondition1}) for $\Lambda=\tr_B.$
The second theorem concerns the recoverability:
\begin{manualtheorem}{2'}[Jen\v cov\'a]\label{theorem:recoverability'}
	\emph{For $\alpha\in(-1,0)\cup(0,1)$, density matrices $\sigma_{AB},\rho_{AB}$ and the quantum channel $\Lambda=\tr_B$ we have $\D_\alpha(\rho_{AB}||\sigma_{AB}) = \D_\alpha(\rho_A||\sigma_A)$ if and only if $\R_{\sigma,\tr_B}(\rho_{A})=\rho_{AB}$.}
\end{manualtheorem}
The third theorem provides our new equality conditions:
\begin{manualtheorem}{3'}\label{theorem:newequality'}
	\emph{For $\alpha\in(-1,0)\cup(0,1)$, density matrices $\sigma_{AB},\rho_{AB},$ $\D_\alpha(\sigma_A||\rho_A) = \D_\alpha(\sigma_{AB}||\rho_{AB})$ if and only if}
\begin{equation}\label{eq:newresult'}
\sigma_A^\beta(\rho_A\sigma_A^{-\alpha})^{\frac{\beta}{\alpha-1}}\otimes\id_B = \sigma_{AB}^\beta(\rho_{AB}\sigma_{AB}^{-\alpha})^{\frac{\beta}{\alpha-1}},\quad \forall \beta\in \mathbb C.
\end{equation}
\end{manualtheorem}

\begin{rem}\label{rem:generalization}
	Theorem \ref{theorem:newequality'} implies Theorem \ref{theorem:equality'}: To see that, pick $\beta=-\alpha$ and use property $\it{3}$ in Lemma \ref{lem:mean} to get (\ref{eq:remark}):
	\begin{align}
		\sigma^{-\alpha} (\rho \sigma^{-\alpha})^{\frac{-\alpha}{\alpha-1}} = \sigma^{-\alpha}(\sigma^\alpha \rho^{-1})^{\frac{-\alpha}{1-\alpha}} = \sigma^{-\alpha}\sharp_{\frac{-\alpha}{1-\alpha}} \rho^{-1}.
	\end{align}
\end{rem}

All three theorems rely on the simple observation that the DPI saturation amounts to setting equalities in (\ref{ineq:DPI}):
\begin{equation}\label{eq:equality}
	\bra{\rho_A^\frac12}\Delta^{A -\alpha}_{\sigma_A,\rho_A^{1/2}a_*^2\rho_A^{1/2}}\ket{\rho_A^\frac12} \stackrel{\circled{1}}{=} \bra{\rho_{AB}^\frac12}\Delta^{AB -\alpha}_{\sigma_{AB},\rho_{AB}^{1/2}(a_*^2\otimes \mathbb{I})\rho_{AB}^{1/2}}\ket{\rho_{AB}^\frac12} \stackrel{\circled{2}}{=} \bra{\rho_{AB}^\frac12}\Delta^{AB -\alpha}_{\sigma_{AB},\rho_{AB}^{1/2}b_*^2\rho_{AB}^{1/2}}\ket{\rho_{AB}^\frac12}\,.
\end{equation}
The proof of Theorem~\ref{theorem:equality'} will only use condition $\circled{2}$, while the proof of the recoverability statement Theorem~\ref{theorem:recoverability'} only uses condition $\circled{1}$. 
This means that \circled{1} and \circled{2} imply each other in a nontrivial way. 
Theorem~\ref{theorem:newequality'} uses both $\circled{1}$ and $\circled{2}$.

\begin{proof}[Proof of Theorem~\ref{theorem:equality'}]
	The proof is rather straightforward. For the necessary condition, observe that the second equality \circled{2} above implies $a_*^2\otimes \mathbb{I}=b_*^2$ as the optimizer is unique according to (\ref{eq:optimizer}).  
	According to (\ref{eq:maximizers}), it gives (\ref{eq:oldresult}), since the scalar pre-factors are identical by the assumption $\D_\alpha(\rho_A||\sigma_A) = \D_\alpha(\rho_{AB}||\sigma_{AB})$. To see sufficiency, we can simply left-multiply with $\rho_{AB}^{-1}$ and take the trace.
\end{proof}
To proof Theorem~\ref{theorem:recoverability'}, we shall follow the argument \`a la Petz on the equality condition of the relative entropy~\cite{petz2003monotonicity}.
\begin{proof}[Proof of Theorem \ref{theorem:recoverability'}]
Since $f(x)=(x+t)^{-1}, t\geq 0$ is operator convex, we have
	\begin{equation}\label{eq:petzargument0}
		X_t := U^*(\Delta_{\sigma_{AB}, \rho_{AB}^{1/2}(a_*^2\otimes \mathbb{I})\rho_{AB}^{1/2}}+t)^{-1} U - (\Delta_{\sigma_A, \rho_A^{1/2}a_*^2\rho_A^{1/2}}+t)^{-1}\geq 0 ,\,\,\,\forall t.
\end{equation}
We now first show that
	\begin{equation}\label{eq:petzargument1}
	U^*(\Delta_{\sigma_{AB}, \rho_{AB}^{1/2}(a_*^2\otimes\mathbb{I})\rho_{AB}^{1/2}}+t)^{-1}\ket{\rho_{AB}^\frac12} = (\Delta_{\sigma_A, \rho_A^{1/2}a_*^2\rho_A^{1/2}}+t)^{-1} \ket{\rho^\frac12_A},
\end{equation}
	where $\ket{\rho_{AB}^\frac12}=U\ket{\rho_{A}^\frac12}$.
	To see this, first note that the DPI equality and the integral representations (\ref{eq:integral}) give
\begin{align}
	\int_0^\infty \mathrm{d}t\, t^\alpha \bra{\rho_A^\frac12}X_t \ket{\rho_A^\frac12} = 0.
\end{align}
	Since $X_t\geq 0$, the integrand is $\geq 0$ and hence must vanish for almost all $t\geq 0$ (and hence for all $t\geq 0$ by continuity). But since $X_t\geq 0$, this implies $X_t\ket{\rho_A^\frac12}=0$ for all $t\geq 0$, which is \eqref{eq:petzargument1}.
	Now acting with $U$ on the left on \eqref{eq:petzargument1} gives
\begin{equation}\label{eq:petzargument2}
	UU^*(\Delta_{\sigma_{AB}, \rho_{AB}^{1/2}(a_*^2\otimes \mathbb{I})\rho_{AB}^{1/2}}+t)^{-1}\ket{\rho_{AB}^\frac12} = U(\Delta_{\sigma_A, \rho_A^{1/2}a_*^2\rho_A^{1/2}}+t)^{-1} \ket{\rho^\frac12_A}.
\end{equation}
We now show that the LHS is invariant under the projection $P=UU^*$, i.e.,
	\begin{align}
		P(\Delta_{\sigma_{AB}, \rho_{AB}^{1/2}(a_*^2\otimes \mathbb{I})\rho_{AB}^{1/2}}+t)^{-1}\ket{\rho_{AB}^\frac12} = (\Delta_{\sigma_{AB}, \rho_{AB}^{1/2}(a_*^2\otimes \mathbb{I})\rho_{AB}^{1/2}}+t)^{-1}\ket{\rho_{AB}^\frac12}. \label{eq:petzargument_projection}
	\end{align}
	To see this, first take the trace of the square of the modulus of (\ref{eq:petzargument1}) to get
\begin{align}
	\left\langle(\Delta_{\sigma_{AB}, \rho_{AB}^{1/2}(a_*^2\otimes \mathbb{I})\rho_{AB}^{1/2}}+t)^{-1}(\rho_{AB}^\frac12)\right|P &\left|(\Delta_{\sigma_{AB}, \rho_{AB}^{1/2}(a_*^2\otimes\mathbb{I})\rho_{AB}^{1/2}}+t)^{-1}(\rho_{AB}^\frac12)\right\rangle\nonumber \\ &= \braket{(\Delta_{\sigma_A, \rho_A^{1/2}a_*^2\rho_A^{1/2}}+t)^{-2} (\rho^\frac12_A)}{\rho^\frac12_A}, \nonumber
\end{align}
where we used that $(\Delta_{\sigma_A, \rho_A^{1/2}a_*^2\rho_A^{1/2}}+t)^{-1}$ is Hermitian.
Now consider the derivative of (\ref{eq:petzargument1}) with respect to $t$,
\begin{equation}\label{eq:petzargument3}
	U^*(\Delta_{\sigma_{AB}, \rho_{AB}^{1/2}(a_*^2\otimes\mathbb{I})\rho_{AB}^{1/2}}+t)^{-2}(\rho_{AB}^\frac12) = (\Delta_{\sigma_A, \rho_A^{1/2}a_*^2\rho_A^{1/2}}+t)^{-2} (\rho^\frac12_A),
\end{equation}
	and use it on the RHS to get
\begin{align}
	\left\langle(\Delta_{\sigma_{AB}, \rho_{AB}^{1/2}(a_*^2\otimes \mathbb{I})\rho_{AB}^{1/2}}+t)^{-1}(\rho_{AB}^\frac12)\right| P &\left|(\Delta_{\sigma_{AB}, \rho_{AB}^{1/2}(a_*^2\otimes \mathbb{I})\rho_{AB}^{1/2}}+t)^{-1}(\rho_{AB}^\frac12)\right\rangle\nonumber \\ 
	&= \braket{U^*(\Delta_{\sigma_{AB}, \rho_{AB}^{1/2}(a_*^2\otimes\mathbb{I})\rho_{AB}^{1/2}}+t)^{-2}(\rho_{AB}^\frac12)}{\rho^\frac12_A}.
\end{align}
	We can now use that $U(\rho_A^{\frac12})= \rho_{AB}^{\frac12}$ to find
	\begin{align}
		\bra{(\Delta_{\sigma_{AB}, \rho_{AB}^{1/2}(a_*^2\otimes \mathbb{I})\rho_{AB}^{1/2}}+t)^{-1}(\rho_{AB}^\frac12)}(\mathbf 1-P)\ket{(\Delta_{\sigma_{AB}, \rho_{AB}^{1/2}(a_*^2\otimes \mathbb{I})\rho_{AB}^{1/2}}+t)^{-1}(\rho_{AB}^\frac12)} =& \,\,0\,.
\end{align}
	Since $P$ is an orthogonal projection, this implies (\ref{eq:petzargument_projection}).
We thus see that the LHS of (\ref{eq:petzargument2}) is invariant under the projection $P$ and obtain
\begin{equation}
	U(\Delta_{\sigma_A, \rho_A^{1/2}a_*^2\rho_A^{1/2}}+t)^{-1} (\rho^\frac12_A)=(\Delta_{\sigma_{AB}, \rho_{AB}^{1/2}(a_*^2\otimes \mathbb{I})\rho_{AB}^{1/2}}+t)^{-1}(\rho_{AB}^\frac12).
\end{equation}
Integrating this equation using the integral representations (\ref{eq:integral}) again gives
\begin{equation}\label{eq:pickup}
	U\Delta^{\beta}_{\sigma_A, \rho_A^{1/2}a_*^2\rho_A^{1/2}}(\rho_A^\frac12) = \Delta^{\beta}_{\sigma_{AB},\rho_{AB}^{1/2}(a_*^2\otimes\id)\rho_{AB}^{1/2}}(\rho_{AB}^\frac12),\,\,\,\forall \beta\in (-1,0)\cup (0,1).
\end{equation}
Let us choose $\beta=-\frac12$ and expand (\ref{eq:pickup}):
\begin{equation}
 \sigma_{A}^{-\frac12} \rho_{A}^\frac12 (\rho_A^\frac12\sigma_A^{-\alpha}\rho_A^\frac12)^\frac{1}{2-2\alpha}\rho_A^{-\frac12}\otimes \id 
	= \sigma^{-\frac12}_{AB}\rho^{\frac12}_{AB} \left(\rho_{AB}^\frac12\left(\rho_{A}^{-\frac12}(\rho_{A}^\frac12\sigma_{A}^{-\alpha}\rho_{A}^\frac12)^\frac{1}{1-\alpha}\rho_{A}^{-\frac12}\otimes \mathbb{I}\right)\rho_{AB}^\frac12\right)^\frac12\rho_{AB}^{-\frac12}.\nonumber
\end{equation}
Now define $L:=\left((\rho_{A}^\frac12\sigma_{A}^{-\alpha}\rho_{A}^\frac12)^\frac{1}{2-2\alpha}\rho_{A}^{-\frac12}\otimes \mathbb{I}\right)\rho_{AB}^\frac12$. Then we can write the above equation as
\begin{equation}
\sigma_{A}^{-\frac12} \rho_{A}^\frac12 \otimes \id_B L|L|^{-1} = \sigma^{-\frac12}_{AB}\rho^{\frac12}_{AB},
\end{equation}
where $|L|:=(L^*L)^\frac12$. Taking the Hermitian square, we obtain
\begin{equation}
\sigma_{A}^{-\frac12} \rho_{A}^\frac12 \otimes \id_B ( L|L|^{-2}L^*) \rho_{A}^\frac12  \sigma_{A}^{-\frac12} \otimes \id_B = \sigma^{-\frac12}_{AB}\rho_{AB} \sigma^{-\frac12}_{AB}\,,
\end{equation}
which simplifies to
\begin{equation}
\sigma^{\frac12}_{AB}\left( \sigma_{A}^{-\frac12} \rho_{A}\sigma_{A}^{-\frac12} \otimes \id_B\right)\sigma^{\frac12}_{AB}  = \rho_{AB}\,.
\end{equation}
So we have perfect Petz recovery $\R_{\sigma,\tr_B}(\rho_{AB})=\rho_{AB}$.
\end{proof}
Finally, let us now combine the equality conditions separately found for equalities \circled{1} and \circled{2} and see what we obtain.
\begin{proof}[Proof of Theorem \ref{theorem:newequality'}]
We directly pick up from the previous proof at the step (\ref{eq:pickup}), and replace $a_*^2$ with $b_*^2$. For any $\beta\in (-1,0)\cup (0,1), \alpha\in (-1,0)\cup (0,1)$, we then have have
\begin{equation}\label{eq:mainlemma}
U\Delta^{\beta}_{\sigma_A, \rho_A^{1/2}a_*^2\rho_A^{1/2}}(\rho_A^\frac12) = \Delta^{\beta}_{\sigma_{AB},\rho_{AB}^{1/2}b_*^2\rho_{AB}^{1/2}}(\rho_{AB}^\frac12),
\end{equation}
which leads to
\begin{equation}
\sigma_{A}^\beta \rho_{A}^\frac12 (\rho_A^\frac12\sigma_A^{-\alpha}\rho_A^\frac12)^\frac{\beta}{\alpha-1}\rho_A^{-\frac12}\otimes \id_B = \sigma^\beta_{AB}\rho^\frac12_{AB} (\rho_{AB}^\frac12\sigma_{AB}^{-\alpha}\rho_{AB}^\frac12)^\frac{\beta}{\alpha-1}\rho_{AB}^{-\frac12}.
\end{equation}
We can re-write this equation using the geometric mean,
\begin{align}
\sigma_{A}^\beta \rho_{A}(\rho_{A}^{-1}\sharp_\frac{\beta}{\alpha-1}\sigma_A^{-\alpha})\otimes \id_B =& \sigma^\beta_{AB}\rho_{AB}(\rho^{-1}_{AB}\sharp_\frac{\beta}{\alpha-1}\sigma_{AB}^{-\alpha}),
\end{align}
	and then use the property $A\sharp_\lambda B = A(A^{-1}B)^\lambda$ to get
\begin{align}
 \sigma_{A}^\beta (\rho_{A}\sigma_A^{-\alpha})^\frac{\beta}{\alpha-1}\otimes \id_B =& \sigma^\beta_{AB}(\rho_{AB}\sigma_{AB}^{-\alpha})^\frac{\beta}{\alpha-1},
\end{align}
which is the desired condition.
Note that for $X>0,\, \beta\rightarrow X^\beta$ is an entire function. Therefore, (\ref{eq:mainlemma}) and the final condition in fact hold for any $\beta\in\mathbb{C}$. This shows the necessary condition. To see sufficiency, we choose $\beta=-1$, left-multiply with $\sigma_{AB}$ and the take the trace. 
\end{proof}

Finally, we can use the standard Stinespring dilation argument to show that Theorem \ref{theorem:newequality'} implies Theorem \ref{theorem:newequality}. We follow \cite{leditzky2017data} where they show Theorem \ref{theorem:equality'} implies Theorem \ref{theorem:equality}.
\begin{proof}[Proof of Theorem \ref{theorem:newequality}] 
For any CPTP map $\l:\B(\H)\rightarrow\B(\K)$, the Stinespring dilation theorem allows us to dilate $\l$ to an isometry $V:\H \rightarrow \K\otimes\H'$ such that for all $\rho\in\B(\H)$,
\begin{equation}
\l(\rho)=\tr_{\H'} V \rho V^*.
\end{equation}

Since the divergence is invariant under conjugation by an isometry, the DPI for any channel $\l$ can be traced down to the partial trace map $\tr_{\H'}$. When the DPI is saturated, our theorem \ref{theorem:newequality'} implies that 
\begin{equation}
\begin{aligned}
	\l(\sigma)^\beta (\l(\rho)\l(\sigma)^{-\alpha})^{\frac{\beta}{\alpha-1}}\otimes\id_{\H'} =&\,V\sigma^\beta V^*[V\rho V^*(V\sigma V^*)^{-\alpha}]^{\frac{\beta}{\alpha-1}}\,,\\
=&\, V \sigma^\beta (\rho\sigma^{-\alpha})^{\frac{\beta}{\alpha-1}}V^*\,,
\end{aligned}
\end{equation}
where we used that $f(V X V^*) = Vf(X)V^*$ for any function $f$, operator $X$ and isometry $V$.
The adjoint channel $\l^*:\B(\K)\rightarrow \B(\H)$ is given by $\l^*(a) = V^* a\otimes \id_{\H'} V$. Applying $V^* \cdot V$ to both sides of the above equality, we therefore get
\begin{equation}
\l^*\left(\l(\sigma)^\beta [\l(\rho)\l(\sigma)^{-\alpha}]^{\frac{\beta}{\alpha-1}}\right) = \sigma^\beta(\rho\sigma^{-\alpha})^{\frac{\beta}{\alpha-1}}\,,
\end{equation}
where we use that $V$ is an isometry: $V^*V=\mathbb I$. 
\end{proof}
Similarly, the exactly same Stinespring dilation gives Theorem \ref{theorem:recoverability} from \ref{theorem:recoverability'}.

\section{Discussion}
We have provided a unified treatment of the equality conditions for the DPI of the sandwiched R\'enyi divergences using the relative modular operator approach, emphasizing the role of the two conditions $\circled{1}$ and $\circled{2}$ in (\ref{eq:equality}) to obtain previously known results. While the two conditions lead us to different results, they are logically equivalent.
For future work, it may be interesting to study how the two (in-)equalities behave once we go away from the case of exact saturation of the DPI. 
We expect that the two inequalities then behave differently. As mentioned in the introduction, recently Gao and Wilde studied the case of robust recovery for optimized $f$-divergences \cite{gao2020recoverability}. It would be interesting to understand the connection between the two (in-)equalities $\circled{1}$ and $\circled{2}$ and their results on approximate recoverability.  

Let us now come back to the case of equality. Recently, similar equality conditions were independently proved for the more general $\alpha-z$ R\'enyi divergences \cite{chehade2020saturating,zhang2020equality}. These conditions are slightly different but they both reduce to Theorem \ref{theorem:equality} for the sandwiched R\'enyi divergence case $\alpha=z.$ Our result suggests that there are perhaps more equality conditions to be discovered for the $\alpha-z$ R\'enyi divergences. However, our approach via the relative modular operator might not be useful for these $\alpha-z$ generalizations as they generally cannot be formulated as instances of optimized quantum $f$-divergences. Of course, insights about equality conditions might nevertheless help us better understand the recoverability perspectives on these $\alpha-z$ R\'enyi divergences.\\

\textbf{Acknowlegdements.} We would like to thank Joe Renes and David Sutter for comments and discussions. This work is supported by the Swiss National Science Foundation via the National Center for Competence in Research ``QSIT", and by the Air Force Office of Scientific Research (AFOSR) via grant FA9550-19-1-0202. 

\bibliographystyle{JHEP}
\bibliography{dpi}

\appendix
\section{Deriving the Petz map}\label{app:jensenequality}
The only inequality that enters the proof of the DPI for the quantum relative entropy is Jensen's operator inequality. 
Thus, the case of equality in the DPI translates into the case of equality in Jensen's operator inequality. 
This case was, perhaps unsurprisingly, studied by Petz in Ref.~\cite{Petz1986equality}, resulting in the following Lemma (which we state in simplified form), which we will use to give a simple, direct derivation of the Petz recovery map.
\begin{lem}
Let $\Phi$ be a normalized positive map between two $C^*$ algebras containing units and let $f:\mathbb R\rightarrow \mathbb R$ be a convex map. 
Then
	\begin{align}
	f(\Phi(a))  \leq \Phi(f(a)),\quad \forall a = a^*.
	\end{align}
	Moreover, if $f$ is non-affin and equality holds for some $a$, then 
	\begin{align}
	\Phi(a^2) = \Phi(a)^2. 
	\end{align}
\end{lem}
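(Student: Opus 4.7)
The plan is to apply Stinespring's dilation theorem to reformulate the statement in terms of isometric compressions of a $*$-representation, and then exploit the integral representation of convex functions to reduce Jensen's inequality to the family of ramp functions $(x-t)_+$.

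First, I would write $\Phi(x) = V^*\pi(x)V$ where $\pi$ is a unital $*$-representation on some Hilbert space and $V$ is an isometry ($V^*V = 1$); set $P = VV^*$. The Kadison--Schwarz identity for self-adjoint $a$ takes the form
\[ \Phi(a^2) - \Phi(a)^2 = V^*\pi(a)(1-P)\pi(a)V \geq 0, \]
so the conclusion $\Phi(a^2) = \Phi(a)^2$ is equivalent to the intertwining $(1-P)\pi(a)V = 0$, i.e., $\pi(a)$ preserves the range of $V$. This is the reformulated target that the whole argument is aimed at establishing.

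Next, use the integral representation $f(x) = \alpha + \beta x + \int_{\mathbb{R}} (x-t)_+ \, d\mu_f(t)$, where $\mu_f \geq 0$ is the distributional second derivative of $f$, nonzero precisely when $f$ is non-affine. Since affine pieces commute with $\Phi$, Jensen's inequality reduces to the ramp case $(\Phi(a)-t)_+ \leq \Phi((a-t)_+)$ at each $t$; integrating against $d\mu_f$ yields the full inequality. For equality, assuming $f$ is non-affine so that $\mu_f \neq 0$, the vanishing $\int[\Phi((a-t)_+) - (\Phi(a)-t)_+] \, d\mu_f(t) = 0$ of a pointwise-nonnegative integrand against a nonzero measure forces $\Phi((a-t)_+) = (\Phi(a)-t)_+$ for $\mu_f$-a.e.\ $t$, hence by continuity on a non-empty open set.

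Finally, one passes from equality of ramps at a range of shifts $t$ to $\Phi(a^2) = \Phi(a)^2$ via spectral projections: comparing values at neighbouring $t$'s and differentiating produces the equality of spectral projections $\Phi(\mathbf{1}_{(t,\infty)}(a)) = \mathbf{1}_{(t,\infty)}(\Phi(a))$, which in Stinespring form reads $(1-P)E^{\pi(a)}_{(t,\infty)} V = 0$; functional calculus then reassembles these to $(1-P)\pi(a) V = 0$, the desired conclusion. The main obstacle is ensuring that the support of $\mu_f$ meets the joint spectrum of $\pi(a)$ and $\Phi(a)$ in a sufficiently rich way to sweep out the full spectral resolution; while non-triviality of $\mu_f$ is guaranteed by $f$ being non-affine, the careful matching between the non-affine region of $f$ and the spectra is where the proof demands genuine care.
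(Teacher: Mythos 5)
The paper does not actually prove this lemma---it is quoted (in ``simplified form'') from Petz's 1986 paper on equality in Jensen's inequality---so your attempt can only be judged on its own merits. It contains a genuine gap at its central step: the reduction to ramp functions. The function $x\mapsto (x-t)_+$ is convex but \emph{not operator convex}, and the operator Jensen inequality $f(\Phi(a))\leq\Phi(f(a))$ for unital positive maps $\Phi$ holds precisely for operator convex $f$ (Davis--Choi, Hansen--Pedersen). Consequently the ``pointwise in $t$'' operator inequality $(\Phi(a)-t)_+\leq\Phi\bigl((a-t)_+\bigr)$ that you integrate against $\dd\mu_f$ is simply false in general, already for compressions $\Phi(a)=V^*aV$. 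If this reduction worked it would establish the operator Jensen inequality for \emph{every} convex $f$, which is known to fail (on all of $\mathbb R$ the operator convex functions are exactly the convex quadratics). The same defect destroys your equality analysis: the integrand $\Phi((a-t)_+)-(\Phi(a)-t)_+$ is not operator-nonnegative, so its vanishing integral forces nothing. Note that the lemma's hypothesis ``convex'' must in any case be read as ``operator convex''---that is how Petz states it and how the paper uses it, with $f=-\log$ and $f(x)=(x+t)^{-1}$; with merely convex $f$ the first display of the lemma is not a correct operator inequality.

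The repair is to replace the ramp kernel by an operator convex one: every operator convex $f$ on an interval admits a representation with kernels of resolvent type, e.g. $\frac{1}{t+x}$ or $\frac{x^2}{1+tx}$ integrated against a positive measure that is nonzero exactly when $f$ is non-affine (this is the integral representation the paper itself records in Eq.~(\ref{eq:integral})). Each such kernel satisfies Jensen individually, equality then forces $\Phi\bigl((t+a)^{-1}\bigr)=(t+\Phi(a))^{-1}$ for a.e.\ $t$ in the support of the measure (hence for all $t$ by continuity), and expanding the resolvents in powers of $1/t$ at the second order yields $\Phi(a^2)=\Phi(a)^2$; this is precisely the mechanism the paper deploys in the main text via the operators $X_t$ in the proof of Theorem~\ref{theorem:recoverability'}. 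Two smaller points: Stinespring's theorem requires complete positivity, which the lemma does not assume---either restrict $\Phi$ to the commutative $C^*$-algebra generated by $a$ and $\mathbf 1$ (where positivity implies complete positivity), or avoid dilations altogether and use the Kadison--Schwarz inequality $\Phi(a^2)\geq\Phi(a)^2$ directly; and your final passage from equality of ramps to equality of spectral projections to $(1-P)\pi(a)V=0$ is sketched too loosely to assess, though it is moot given the earlier failure.
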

We will apply the Lemma to the postive map $\Phi(a) = U^* a U$ with $a=\Delta_{AB} := \Delta^{AB}_{\sigma,\rho}$. 
Then the case of equality in the DPI corresponds to $(U^*a U)^2 = U^* a^2 U$.
\begin{lem}
	Let $UU^* = P$, $U^*U=\mathbf 1$ and $a=a^*$. Then $(U^*a U)^2 = U^* a^2 U$ if and only if $[P,a]=0$ (i.e., $a=PaP+QaQ$ with $Q=\id - P$). 
	\begin{proof}
		We first prove that $[P,a]=0$. To do this, we use $PU=UU^*U = U$. First,
		\begin{align}
			(U^* a U)^2 &= U^* a UU^* aU = U^* aPaU = U^* PaPaP U.
		\end{align}
		Second,
		\begin{align}
			U^* a^2 U&= U^* a(P+Q)aU = U^* aPa U + U^* a Q aU.
		\end{align}
		Conjugating both equations with $U(\cdot)U^*$ and using $(U^*a U)^2 = U^* a^2 U$, we then find
		\begin{align}
		PaPaP = PaPaP + PaQaP = PaPaP + (PaQ)(PaQ)^*.
		\end{align}
		Thus $PaQ = 0 = QaP$, since $bb^*=0$ implies $b^*=0=b$ ($\|b^*\ket{\Psi}\|=0$ for any $\ket{\Psi}$), and therefore $a=PaP+QaQ$.
		
		For the converse direction, we use $a=PaP+QaQ$ and $PU=U$ to find
		\begin{align}
			U^* a^2 U = U^* (PaPaP + QaQaQ) U = U^* PaPaP U = U^* a U U^* a U = (U^*a U)^2.
		\end{align}
	\end{proof}
\end{lem}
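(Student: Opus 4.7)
The plan is to prove both directions by direct algebraic manipulation, using only the defining properties $U^{*}U=\mathbf{1}$, $UU^{*}=P$ (and hence $P^{2}=P$), $Q:=\mathbf{1}-P$ with $PQ=QP=0$, and the induced identities $PU=U$ and $U^{*}P=U^{*}$ that they entail.

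For the easier implication $[P,a]=0 \Rightarrow (U^{*}aU)^{2}=U^{*}a^{2}U$, I would first rewrite $[P,a]=0$ as $a=PaP+QaQ$. Squaring and using $PQ=0$ to kill the cross terms gives $a^{2}=PaPaP+QaQaQ$. Sandwiching by $U^{*},U$ and using $U^{*}P=U^{*}$, $PU=U$, together with $U^{*}Q=0=QU$, annihilates the $QaQaQ$ contribution and collapses $U^{*}PaPaPU$ to $U^{*}aPaU=(U^{*}aU)(U^{*}aU)$, which is the desired identity.

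For the harder implication, I would compute $(U^{*}aU)^{2}=U^{*}aUU^{*}aU=U^{*}aPaU$ by inserting $UU^{*}=P$, and separately $U^{*}a^{2}U=U^{*}a(P+Q)aU=U^{*}aPaU+U^{*}aQaU$. The hypothesis then reduces to the single identity $U^{*}aQaU=0$. Since $Q=Q^{2}=Q^{*}$, this expression equals $(QaU)^{*}(QaU)$, so positivity forces $QaU=0$. Right-multiplying by $U^{*}$ gives $QaP=0$, and self-adjointness of $a$ then forces $PaQ=(QaP)^{*}=0$. Expanding $a=(P+Q)a(P+Q)$ and dropping the two vanishing cross terms yields $a=PaP+QaQ$, equivalently $[P,a]=0$.

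The main (and only mild) conceptual step is the recognition that $U^{*}aQaU$ has the form $b^{*}b$ with $b=QaU$, so its vanishing forces $b=0$; everything else is bookkeeping with the orthogonal projection identities for $P$ and $Q$ together with the intertwining relations $PU=U$, $U^{*}P=U^{*}$. I do not expect any genuine obstacle beyond spotting this one positivity argument.
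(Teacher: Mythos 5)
Your proof is correct and follows essentially the same route as the paper: both directions rest on the decomposition $U^*a^2U=U^*aPaU+U^*aQaU$ and the observation that the cross term is a positive operator of the form $b^*b$ whose vanishing forces $b=0$. The only cosmetic difference is that you identify $b=QaU$ directly and recover $QaP=0$ by right-multiplying with $U^*$, whereas the paper first conjugates the identity with $U(\cdot)U^*$ and factors the cross term as $(PaQ)(PaQ)^*$; the content is identical.
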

Returning to our problem at hand, we find $[\Delta_{AB},P]=0$ and consequently $[\Delta_{AB}^t,P]=0$ for any $t\geq 0$. In particular, we get 
\begin{align} \label{eq:commuting}
 P\Delta^t_{AB}P = (P\Delta_{AB} P)^t.
\end{align}
The operator $\Delta_{AB}^t$ acts as $\Delta_{AB}^t(b)=\sigma_{AB}^tb\rho_{AB}^{-t}$ and similarly for $\Delta_A$.
Therefore $\Delta_{AB}^t(\rho_{AB}^t) = \sigma_{AB}^t$.
Furthermore, $P(\rho^\frac12_{AB}) = \rho^\frac12_{AB}$.
We now specialize to the case $t=1/2$ and evaluate both sides of \eqref{eq:commuting} in different ways. First, for the LHS we get
\begin{equation}	
	P\Delta^\frac12_{AB}P(\rho^\frac12_{AB})= \Delta^\frac12_{AB}P(\rho^\frac12_{AB})=\Delta^\frac12_{AB}(\rho^\frac12_{AB}) = \sigma^\frac12_{AB}.\label{eq:sigmaAB}
\end{equation}
Considering the RHS of \eqref{eq:commuting}, on the other hand, we find using $P=U U^*$ and $U^*\Delta_{AB}U=\Delta_A$: 
\begin{align}
	P\Delta^\frac12_{AB}P(\rho^\frac12_{AB})&=(UU^*\Delta_{AB}UU^*)^\frac12(\rho^\frac12_{AB})=U(U^*\Delta_{AB}U)^\frac12U^*(\rho^\frac12_{AB})\\
	&=U\Delta^\frac12_{A}U^*(\rho^\frac12_{AB})=U\Delta^\frac12_{A}(\rho^\frac12_{A})=U(\sigma_A^\frac12)=(\sigma_A^\frac12\rho_A^{-\frac12}\otimes\id_B)\rho_{AB}^\frac12.
\end{align}
Together with the RHS of \eqref{eq:sigmaAB} we obtain $(\sigma_A^\frac12\rho_A^{-\frac12}\otimes\id_B)\rho_{AB}^\frac12=\sigma_{AB}^\frac12.$ 
Taking the square, we conclude: 
\begin{align}
	\sigma_{AB} &= \rho_{AB}^\frac12(\rho_A^{-\frac12}\sigma_A \rho_A^{-\frac12}\otimes \id_B)\rho_{AB}^{\frac12}= \R_{\rho,\tr_B}(\sigma_A)
\end{align}
and clearly $\R_{\rho,\tr_B}(\rho_A)=\rho_{AB}$. We have thus re-derived the Petz recovery map $\R$. 
Note that all that we used to derive the recovery map is that $[\Delta_{AB},P]=0$, which we followed from the case of equality in Jensen's operator inequality.

\section{Other necessary conditions}
Let's explore some other choices of $\beta$ in Theorem~
\ref{theorem:newequality'} to obtain more necessary conditions. Consider $\beta=\alpha-1,$
\begin{align}\label{eq:equalitycondition5}
 \sigma_{A}^{\alpha-1} \rho_{A}^\frac12 (\rho_A^\frac12\sigma_A^{-\alpha}\rho_A^\frac12)\rho_A^{-\frac12}\otimes \id_B = \sigma^{\alpha-1}_{AB}\rho^\frac12_{AB} (\rho_{AB}^\frac12\sigma_{AB}^{-\alpha}\rho_{AB}^\frac12)\rho_{AB}^{-\frac12}\,,
 \end{align}
 which reduces to
\begin{align}\label{eq:necessary1}
 \rho_{AB} = \sigma^{1-\alpha}_{AB}\sigma_{A}^{\alpha-1} \rho_A\sigma_A^{-\alpha}\sigma_{AB}^{\alpha}\,.
\end{align}
It is in the form of perfect recovery of $\rho_A.$ Define the recovery map $\R_{\alpha,\sigma,\tr_B}$ as
\begin{equation}\label{eq:recoverymap}
\R_{\alpha,\sigma,\tr_B}(\cdot) := \sigma^{1-\alpha}_{AB}\sigma_{A}^{\alpha-1} \cdot \sigma_A^{-\alpha}\sigma_{AB}^{\alpha}\,.
\end{equation}
This recovery map parameterized by $\alpha$ is a one-parameter family generalizations of the usual Petz map, which corresponds to $\alpha=\frac12.$ It's trace-preserving but perhaps it is not positive for all $\sigma$ when $\alpha\neq\frac12$. Therefore, (\ref{eq:equalitycondition5}) alone may not be a sufficient condition for the DPI equality.

Furthermore, setting $\beta=1-\alpha$ yields a rather simple necessary condition for the equality in DPI.
\begin{align}
 \sigma_{A}^{1-\alpha} \rho_{A}^\frac12 (\rho_A^\frac12\sigma_A^{-\alpha}\rho_A^\frac12)^{-1}\rho_A^{-\frac12}\otimes \id_B = \sigma^{1-\alpha}_{AB}\rho^\frac12_{AB} (\rho_{AB}^\frac12\sigma_{AB}^{-\alpha}\rho_{AB}^\frac12)^{-1}\rho_{AB}^{-\frac12}\,,
\end{align}
 which reduces to
 \begin{align}\label{eq:necessary2}
 \sigma_{A}\rho_A^{-1}\otimes \id_B = \sigma_{AB}\rho_{AB}^{-1}\,.
 \end{align}
Again, (\ref{eq:necessary2}) alone is perhaps not sufficient for the DPI equality.

\end{document}